\newcommand{\nint}{n}
\newcommand{\copyrightstatement}{
    \begin{textblock}{0.84}(0.08,0.01)    
         \noindent
         \footnotesize
         \copyright 2022 IEEE. Personal use of this material is permitted. Permission from IEEE must be obtained for all other uses, in any current or future media, including reprinting/republishing this material for advertising or promotional purposes, creating new collective works, for resale or redistribution to servers or lists, or reuse of any copyrighted component of this work in other works.
    \end{textblock}
}
\begin{document}


\copyrightstatement

\title{Improving AoI via Learning-based Distributed MAC in Wireless Networks\\
}

\author{\IEEEauthorblockN{Yash Deshpande, Onur Ayan,  Wolfgang Kellerer\\
\IEEEauthorblockA{Chair of Communication Networks \\ Technical University of Munich, Germany \\
Email: \{yash.deshpande, onur.ayan, wolfgang.kellerer\}@tum.de }}}

\maketitle


\begin{abstract}
In this work, we consider a remote monitoring scenario in which multiple sensors share a wireless channel to deliver their status updates to a process monitor via an \ac{AP}. Moreover, we consider that the sensors randomly arrive and depart from the network as they become active and inactive. The goal of the sensors is to devise a medium access strategy to collectively minimize the long-term mean network \ac{AoI} of their respective processes at the remote monitor. For this purpose, we propose specific modifications to ALOHA-QT algorithm, a distributed medium access algorithm that employs a \ac{PT} and \ac{RL} to achieve high throughput. We provide the upper bound on the mean network \ac{AoI} for the proposed algorithm along with pointers for selecting its key parameter. The results reveal that the proposed algorithm reduces mean network \ac{AoI} by more than 50 percent for state of the art stationary randomized policies while successfully adjusting to a changing number of active users in the network. The algorithm needs less memory and computation than ALOHA-QT while performing better in terms of \ac{AoI}. 
\end{abstract}


\section{Introduction}
\label{sec:Introduction}

Applications involving \ac{IoT} have emerged across many industries to make up industry 4.0. In the near future, connected robotics and autonomous systems will be a significant driving force behind the design of 5G and beyond \cite{saadvision6g}. Monitoring the states of the robotic machinery and its environment via multiple sensors will result in a large amount of \ac{MTC} data. This data is characterized by periodic traffic generation, and short packet duration \cite{5gtrafficsurvey}. Many industry 4.0 applications such as factory robots, automated forklifts, and conveyor belts need not be active at all times. As individual tasks arise sporadically and are completed by the machines, the number of active users transmitting \ac{MTC} data will be dynamic. 

\ac{AoI} is a performance metric especially suitable for real-time monitoring applications because it measures the freshness of information coming from a remote source \cite{yatesaoisurevy}. \ac{AoI} depends on different aspects of the overall system, such as sampling rate of sensors, queue management, etc. From a \ac{MAC} design perspective, improving \ac{AoI} requires us to jointly optimize the transmission rate, delay, and the probability of successful reception of the information. Hence, adapting and designing special \ac{MAC} protocols with the goal of reducing \ac{AoI} have to be considered. 

One way to optimize \ac{AoI} in wireless \ac{IoT} networks is to use grant-based channel access protocols where a centralized scheduler keeps track of all the active users in the network and distributes the network resources efficiently to all the systems. However, they are typically inefficient for \ac{MTC} applications and complex to implement due to the overhead in signaling, and coordination \cite{mtcsurvey}. This reason compels us to look at simpler distributed grant-free \ac{RA} protocols for \ac{MTC} applications which are descendants of the well-known ALOHA \cite{Abramson:ALOHA} and \ac{SA} \cite{roberts1975aloha}. The simplicity in implementation of \ac{RA} protocols comes with the trade-off of poor performance in terms of \ac{AoI} when compared to the grant-based solutions due to the frequent collision of packets \cite{yatesraaoi}. In fact, it was shown in \cite{faraziaoi} that the mean network \ac{AoI} gap between grant based policies and grant free policies is $\mathcal{O}(n)$ where $n$ is the number of active users in the network. In order to bridge this gap, the users must overcome the collision problem by learning to coordinate and select transmission times in a way that the chances of packets colliding is reduced. 
 
Stationary randomized policies to reduce collisions have been suggested to improve the performance of \ac{RA} schemes. They rely on knowing or estimating the active number of users in the network \cite{atabay_threshold, roberts1975aloha, chenAoI, ChenIoT}. This value is difficult for the users to evaluate in a decentralized setup \cite{Vilgelm2021}.    
One such scheme \cite{chenAoI}, relies on Poisson distributed packet arrivals at the users to estimate the number of active users in the network in order to optimize an \ac{AoI} threshold. It however, cannot be applied to the \textit{generate-at-will} \cite{sungenerateatwill} model addressed in this paper. In \cite{ChenIoT} the authors present \ac{ADRA}, an extension of \cite{atabay_threshold} where the users access the channel only if a predefined \ac{AoI} threshold is exceeded. The channel access probability and the \ac{AoI} threshold are both a function of the number of active users in the network.
The DRR algorithm \cite{jiangdrr} achieves optimal \ac{AoI} by requiring the \ac{AP} to establish the number of active users in the network and relaying this information to the users via a feedback. This scheme therefore requires a more complex feedback as well as offloads some complexity to the \ac{AP} and moves in the direction of centralized scheduling. ALOHA-Q \cite{chu2015application} and ALOHA-QT \cite{alfaropt} require neither the users nor the \ac{AP} to ascertain the number of active users. Both the algorithms maintain the simplicity of classical \ac{RA} and achieve better channel utilization using \ac{RL} to coordinate with each other over the feedback of the \ac{AP}. The high utilization and flexibility of the ALOHA-QT algorithm piques our interest to investigate its performance in terms of \ac{AoI}. ALOHA-QT employs a \ac{PT} \cite{zhangPT} to divide the transmission slots in a frame into non-conflicting schedules which can be selected by the users. 

This paper proposes a better performing and computationally cheaper version of ALOHA-QT and calls it \textit{modified} ALOHA-QT or mAQT. The modifications are better suited for remote monitoring in \ac{MTC} applications where the number of active users is changing over time. We obtain an upper bound for the mean network \ac{AoI} by using properties of a well-known abstract data structure called the \ac{FBT}.

\begin{figure}
    \centering
    \includegraphics[width=0.30\textwidth]{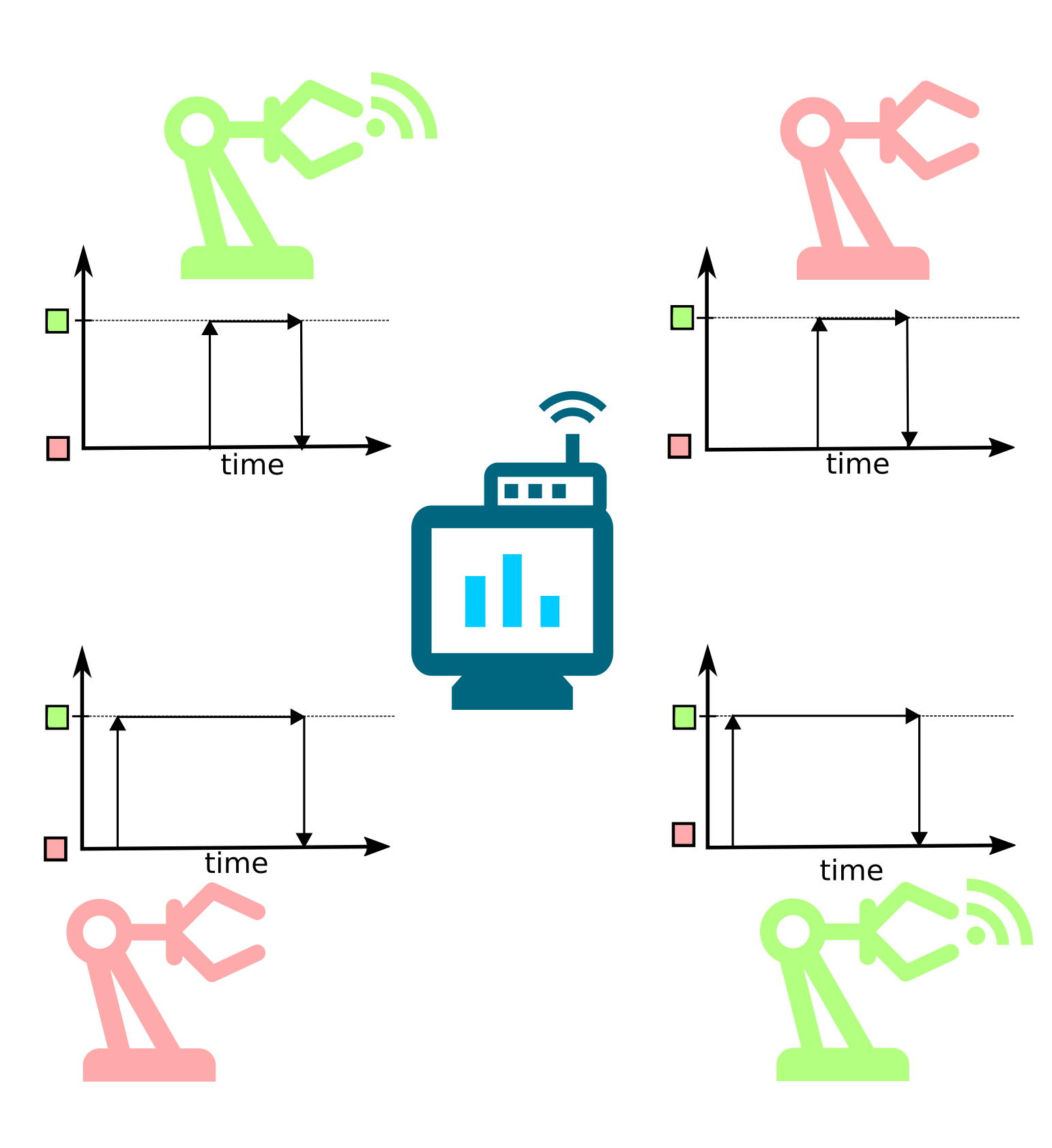}
    \caption{Example Scenario: Remote monitoring of factory robots who need to transmit their status updates only when they have a task at hand.}
    \label{fig:scenario_fig}
\end{figure}

\section{System Model}
\label{sec:System_Model}
We consider a remote process monitor connected to an \ac{AP}, receiving the status update packets of $M$ physical processes over a wireless network. Each process has a sensor and transmitter associated with it, and we call this subsystem a user to be consistent with the terminology used in \ac{MAC} protocols. Time is divided into slots of equal duration. The length of a status update packet is assumed to be constant, and the transmitter takes the duration of the entire slot to transmit a single packet. Throughout the paper, we express all time-related quantities in terms of slots. 

We consider that every user has two states - active and inactive. There are $\nint[t]$ active users in the network at time $t$. The number of slots a user spends in a state follows a geometric distribution with a transition probability of $p$ on every slot. Therefore, the average number of slots before a state transition for every user is $1/p = k$. On average, there is an activation or a deactivation once every $\frac{k}{M}$ slots in the entire network.  

To generate a packet, the sensor accurately samples the state of the process, and the transmitter encapsulates it into a status update packet ready for transmission. This packet generation process of a user is assumed to be \textit{generate-at-will} \cite{sungenerateatwill}, where an active user generates a new status update packet at the beginning of only those slots where it has decided to transmit. Hence, the user always has the freshest state encapsulated in any transmitted packet. The state of the physical process needs to be monitored only if the user is active. Such as system model can be imagined in a factory-like scenario shown in Figure \ref{fig:scenario_fig}, where the state of the machinery needs to be monitored. Individual machines are not occupied at all times and therefore only need to be monitored when they are performing a task. 

A decision to transmit $d_i[t] \in \{1,0\}$ is made at the start of slot $t$ by each active user $i$ according to its policy where $d_i[t]=1$ if the user decides to transmit and $d_i[t]=0$ if it decides to abstain from transmitting. At the end of the slot, the \ac{AP} broadcasts the slot outcome as feedback $F[t]\in \{1,e,0\}$ to all the users. If only one user transmits on the channel in a slot, the \ac{AP} is able to receive the packet successfully, and we call this a success slot, i.e. $F[t]=1$. When more than one user transmits in a slot, their signals interfere, and the \ac{AP} can neither decode any of the packets nor extract the number of users who transmitted on the channel. This scenario is called a collision, i.e. $F[t]=e$. We assume only an interference-limited channel such that transmission by any user fails only in the case of a collision. When no user transmits on the channel in a given time slot, then we say it is an idle slot, i.e. $F[t]=0$. 
The feedback is assumed to be immediate and perfect i.e., all active users receive the feedback at the end of the slot.  
%

At the beginning of a slot, the \ac{AP} sends the successfully received packet (if any) from the previous slot to the process monitor, which updates the state of the respective process accordingly. Hence, the \ac{AoI} of user $i$ at the process monitor is given by,

\begin{align}
\Delta_{i}[t+1] = \begin{cases}
 1, &\text{ if } d_{i}[t] = 1 \text{ and } F[t] = 1 \\
 \Delta_{i}[t] + 1, & \text{ otherwise }.
\end{cases}
\label{eqn:align}
\end{align}
An example of the evolution of \ac{AoI} for a particular user $i$ is shown in Figure \ref{fig:aoi_progression}. Here, $d_{i}[1]=1$, $F[1]=1$ $d_{i}[5]=1$ and $F[5]=1$. For time slots other than $1$ and $5$, the  user $i$ either refrained from transmitting or experienced collisions.  

\begin{figure}
    \centering
\begin{tikzpicture}

\begin{axis}[
width={\columnwidth}, height={4cm},
tick align=outside,
tick pos=left,
x grid style={white!69.0196078431373!black},
xlabel={$t$},
xmajorgrids,
xmin=-0.35, xmax=7.35,
xtick style={color=black},
xtick={0,1,2,3,4,5,6,7},
xticklabels={1,2,3,4,5,6,7,8},
y grid style={white!69.0196078431373!black},
ylabel={$\Delta_i$},
ylabel style={yshift=-1em},
ymajorgrids,
ymin=0.75, ymax=4.25
]
\path [draw=black, semithick]
(axis cs:0,1)
--(axis cs:0,6);

\path [draw=black, semithick]
(axis cs:4,1)
--(axis cs:4,6);

\addplot [ultra thick, FillColor0!, mark=square]
table {%
0 4
1 1
2 2
3 3
4 4
5 1
6 2
7 3
};
\end{axis}
\end{tikzpicture}
    \caption{Linear AoI progression: AoI of a user grows linearly until the user successfully transmits its status update packet at $t=1$ and $t=5$. The AoI becomes 1 after a successful transmission.} 
    \label{fig:aoi_progression}
\end{figure}
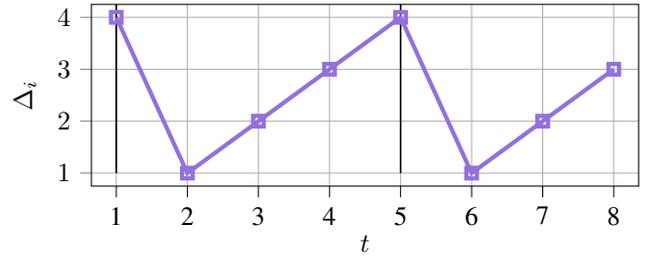

\section{Policy Tree Based Algorithm}
\label{sec:background}
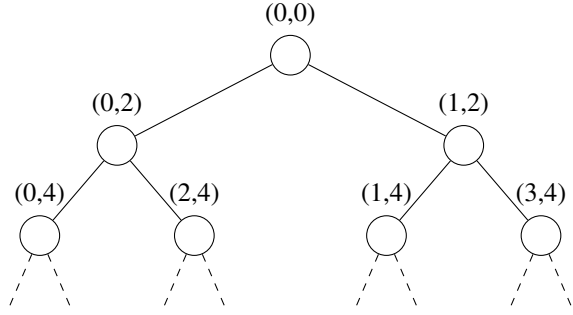
\begin{figure}
    \centering
    \tikzset{every tree node/.style={minimum width=1.5em,draw,circle},
         blank/.style={draw=none},
         edge from parent/.style=
         {draw,edge from parent path={(\tikzparentnode) -- (\tikzchildnode)}},
         level distance=1.2cm, level 1/.style={sibling distance=10mm},level 2/.style={sibling distance=5mm}, level 3/.style={sibling distance=5mm}}
\begin{tikzpicture}
\Tree
[.\node[label={(0,0)}]{}; 
    [. \node[label={(0,2)}]{};
    [ .\node[label={(0,4)}]{};
    \edge[dashed];\node[draw=none]{};
    \edge[dashed];\node[draw=none]{};
    ]
    [. \node[label={(2,4)}]{};
    \edge[dashed];\node[draw=none]{};
    \edge[dashed];\node[draw=none]{};
    ]
    ]
    [.\node[label={{(1,2)}}]{};
    [. \node[label={(1,4)}]{};
    \edge[dashed]; \node[draw=none]{};
    \edge[dashed]; \node[draw=none]{};
    ]
    [.\node[label={(3,4)}]{};
    \edge[dashed]; \node[draw=none]{};
    \edge[dashed]; \node[draw=none]{};
    ]
    ]
]
\end{tikzpicture}
    \caption{Policy Tree: Each node in the tree is called a schedule. Non-conflicting slots in a parent schedule are divided equally amongst the two children schedules.}
    \label{fig:policy_tree}
\end{figure}
A binary policy tree is shown in Figure \ref{fig:policy_tree}. Each node in the tree \footnote{Henceforth, we only talk about binary trees.} is represented by a tuple $(c,2^{l}) \mid 0 \leq c < 2^{l}$, $l \in \mathbb{N}$. The node is called a schedule with level $l$. Every active user in the network keeps a time slot counter $t$. The schedule $(c,2^{l})$ prescribes transmission if $t$ mod $2^{l} = c$. For example, the schedule $(3,4)$ prescribes a transmission when $t=3,7,11,15,19,23,\dots$. The tree is arranged in a way that the children of a parent schedule $(c,2^{l})$ are $(c, 2^{l+1})$ and $(c+2^{l}, 2^{l+1})$ and hence all slots that prescribe transmission for both the children are present in the parent. For example, while one child of the schedule $(3,4)$, e.g., $(3,8)$ prescribes transmission when $t=3,11,19,\dots$,  the other child  $(7,8)$ prescribes transmission when $t=7,15,23,\dots$. The schedules at the same level $l$ prescribe transmission at the same rate ($2^{l}$ slots) but with different offsets. In \ac{MAC} algorithms employing \ac{PT}, every user transmits according to one or more schedules. As long as users do not select ancestors or descendants of schedules selected by other users, they will have selected non-conflicting transmission slots. The number of schedules in the \ac{PT} is determined by the depth $J$ (maximum level) of the tree. A \ac{PT} with depth $J$ has $2^{J+1}-1$ schedules.   

\subsection{ALOHA-QT Algorithm}
\label{subsec:aloha_qt}


\begin{table}[]
\centering
\begin{tabular}{|c|c|cc|}
\hline
\multirow{2}{*}{Symbol} & \multirow{2}{*}{Parameter}   & \multicolumn{2}{c|}{Value}           \\ \cline{3-4} 
                        &                              & \multicolumn{1}{c|}{ALOHA-QT} & mAQT \\ \hline \hline
$J$                     & Depth of PT                  & \multicolumn{1}{c|}{6}       & 5    \\ \hline
$\eta$                  & Schedule selection threshold & \multicolumn{1}{c|}{0.95}     & -    \\ \hline
$\epsilon$              & Relinquishing probability    & \multicolumn{1}{c|}{0.02}     & -    \\ \hline
$\alpha^{+}$            & Increment factor             & \multicolumn{1}{c|}{0.2}      & 0.2  \\ \hline
$\alpha^{-}$            & Decrement factor             & \multicolumn{1}{c|}{-0.5}     & -0.5 \\ \hline
$\gamma_0$              & Weight initialization bias   & \multicolumn{1}{c|}{0.1}      & 0.1  \\ \hline
$\gamma_1$              & Weight initialization noise  & \multicolumn{1}{c|}{1.8}      & 1.8  \\ \hline
$w_{init}$              & Weight initialization factor & \multicolumn{1}{c|}{0.25}     & 0.25 \\ \hline
\end{tabular}
\caption{Symbols for the different parameters used in ALOHA-QT and mAQT. The optimal values given in the last column were obtained after running a gridsearch. These were used or the evaluation in Section V.}
\label{tab:params}
\end{table}
\begin{table}[]
\centering
\begin{tabular}{|c|c|}
\hline
Symbol        & Name                                                                      \\ \hline
$\mathcal{S}$ & Set of all schedules in the \ac{PT}                      \\ \hline
$\mathcal{A}$ & Set of all active schedules                                               \\ \hline
$\mathcal{I}$ & Set of selected schedules i.e policy                                      \\ \hline
$\mathcal{W}$ & Set of all weights. One for each schedule in the \ac{PT} \\ \hline
$t_i$         & Time slot counter                                                         \\ \hline
$\kappa$         & Boolean showing if user is active                                                         \\ \hline
\end{tabular}
\caption{State variables used in ALOHA-QT and mAQT. The cardinality of the sets of variables are shown in the last column. They essentially }
\label{tab:state_variables}
\end{table}

\begin{algorithm}
\begin{algorithmic}[1]
\item {\textbf{Initialization}: 
\\ \tabit[0.8cm] $t_i \gets 0$.
\\ \tabit[0.8cm] $S \gets \{(c,2^l)\mid 0 \leq c < 2^{l},0\leq l \leq J\}$.
\\ \tabit[0.8cm] $\forall w_{(c,2^l)} \in \mathcal{S}: w_{(c,2^l)} \gets \frac{w_{init}}{\gamma_{1}^{l}}\cdot(1-\gamma_{0}+\gamma_{0}\cdot\mathcal{U}(0,1) \}$
\\ \tabit[0.8cm] $\mathcal{W} \gets \{w_{\sigma}\}_{\sigma \in \mathcal{S}}$} 
\\\textit{At every slot, do}
   \STATE \textbf{Active Schedule Update}: \\ \tabit[0.8cm] $\mathcal{A} \gets \{(c,2^{l})\in \mathcal{S} \mid t_{i} \mod 2^{l} = c \}$.
   \STATE \textbf{Schedule Selection}: \\ \tabit[0.8cm]  $\mathcal{I}\gets \{\argmax_{\sigma \in \mathcal{S}}\mathcal{W}\}{\cup}\{\sigma\in \mathcal{S}\mid w_{\sigma}>\eta\}$. 
   \STATE \textbf{Decision}: 
   \\\tabit[0.8cm]\textbf{if} $\mathcal{A}\cap\mathcal{I}\neq\varnothing$ and $\kappa = 1$ \textbf{then} $d_{i}[t_{i}]=1$ 
   \\\tabit[0.8cm]\textbf{else} $d_{i}[t_{i}]=0$.
   \STATE \textbf{Reward Selection}: 
   \\\tabit[0.8cm]\textbf{if} $(F[t_{i}], d_{i}[t_{i}])=(0,0)$ or $(1,1)$ \textbf{then} $\alpha \gets \alpha^{+}$
   \\\tabit[0.8cm]\textbf{else} $\alpha \gets \alpha^{-}$. 
   \STATE \textbf{Weight Update}: 
   \\ \tabit[0.8cm] $\forall \sigma \in \mathcal{A} : w_{\sigma}^{\prime} \gets w_{\sigma} \cdot e^{\alpha\cdot\mathcal{U}(0,1)}$. 
   \STATE \textbf{Voluntary Relinquishment}: 
   \\\tabit[0.8cm]\textbf{if} $\mathcal{U}(0,1) \leq \epsilon$ \textbf{then} $\forall\sigma \in\mathcal{A}:w_{\sigma}^{\prime}\gets 0$.
   \STATE \textbf{Weight Normalization}:
   \\\tabit[0.8cm] $W \gets \sum_{\sigma \in \mathcal{S}}w_{\sigma}$, $W^{\prime} \gets \sum_{\sigma \in \mathcal{S}}w_{\sigma}^{\prime}$
   \\\tabit[0.8cm] $\delta \gets W - W^{\prime}$
   \\\tabit[0.8cm] \textbf{if} $\delta > 0$ and $W^{\prime} < w_{init}\cdot|\mathcal{S}|$
   \\\tabit[0.8cm] \textbf{then} $\forall \sigma \in \mathcal{S}: X_{\sigma} \gets \mathcal{U}(0,1)$
   \\\tabit[1.6cm] $\forall \sigma \in \mathcal{S}: w_{\sigma} \gets w_{\sigma}^{\prime}+\delta\cdot(X_{\sigma}/\sum_{\sigma}X_{\sigma}) $
   \\\tabit[0.8cm] \textbf{else}  $\forall \sigma \in \mathcal{S}: w_{\sigma} \gets w_{\sigma}^{\prime}$ 
   \STATE \textbf{Bound Enforcement}: 
   \\\tabit[0.8cm]$\forall \sigma \in \mathcal{S}: w_{\sigma} \gets \min(1,w_{\sigma})$
   \STATE \textbf{Time Increment}: \\\tabit[0.8cm]$t_{i} \gets t_{i}+1$. 
\end{algorithmic}
\textit{Every instance of $\mathcal{U}(0,1)$ in the above algorithm is an independent random sample drawn from the uniform distribution in the interval $[0,1]$}.
\caption{The ALOHA-QT Algorithm}
\label{algo:aloha_qt}
\end{algorithm}

ALOHA-QT (Algorithm \ref{algo:aloha_qt}) is a distributed expert-based \ac{RL} algorithm using which, every user selects non-conflicting schedules in a \ac{PT}. Notation of parameters and state variables used in this algorithm are given in Tables \ref{tab:params} and \ref{tab:state_variables} respectively. The algorithm iteratively assigns a weight $w_{(c,2^{l})} \in [0,1]$ to every schedule $(c,2^{l})$ in the \ac{PT} according to its potency to achieve a non-conflicting transmission. First (in step 0), every user initializes the weights of all the schedules in the \ac{PT} such that higher schedules (closer to the root node of the \ac{PT}) have higher weights. This ensures that the users explore transmitting at higher rates before moving down the \ac{PT}. Small noise is added to each weight to reduce the probability of two schedules at the same level being initiated with the same weight. The noise also ensures that the initial behavior of all the users is not the same. The rest of the steps are then performed by the users once every slot. 

\begin{enumerate}
    \item \textbf{Step 1}: The user updates in a memory location all schedules which are active i.e., ones who prescribe a transmission in the current slot. At any time slot, there are always $J+1$ active schedules.
    \item \textbf{Step 2}: The user selects from the entire \ac{PT} all schedules with weights $>\eta$ in addition to the schedule with the maximum weight (we call this primary schedule).
    \item \textbf{Step 3}: The user transmits on the channel if it is active and any of the selected schedules suggest the user to transmit.
    \item \textbf{Step 4}: Feedback from the \ac{AP} such as the one mentioned in section \ref{sec:System_Model} is used to select a positive or negative reward by the user.
    \item \textbf{Step 5}: The weights of all $J+1$ active schedules are updated at every time slot. A negative reward selection in step 4 will decrease the weights while a positive reward selection will increase the weights in this step. As users become active and inactive, schedules become unfavourable and promising respectively. Therefore, a multiplicative update strategy is used for facilitating the quick adaptation of weights in such a dynamic environment. Small noise is added to all the updated weights to break ties between schedules that might have the same value of weight. From a classical \ac{RL} sense, this is the \textit{reward function} of the algorithm. 
    \item \textbf{Step 6}: With a small constant probability $\epsilon$, the user sets the weights of all active schedules to 0. This happens randomly once every few hundred slots to make sure that the users do not hold higher schedules indefinitely.
    \item \textbf{Step 7}: If the weights of active schedules were reduced (either due to negative feedback or relinquishment) and if the sum of the weights in the \ac{PT} falls below a value of $w_{init}\cdot|\mathcal{S}|$, the lost weights in this step are redistributed across all the schedules in the \ac{PT}. This allows the users to quickly explore alternative schedules if their selected schedule starts to give negative feedback \cite{herbster1998tracking}.
    \item \textbf{Step 8}: The users make sure that the weights of all schedules remain at most 1. This way, the positive multiplicative update from a good schedule does not increase indefinitely.
    \item \textbf{Step 9}: The time counter is updated so that the user can process the next slot.
\end{enumerate}

In this manner, the users explore schedules in the \ac{PT} and learn to coordinate over time in order to select non-conflicting schedules in a distributed manner. This coordination is achieved only via the broadcast feedback at the end of the slot.  
It is important to note that each user $i$ selects schedules in a distributed manner and the time slot counter $t_i$ does not need to be the same for all users in the network: a schedule $(c, 2^{l})$ for a user $a$ with time slot counter $t_a$ is the same as a schedule $((c+s)$ mod $2^{l}, 2^{l})$ for a user $b$ with time slot counter $t_b = t_a + s$. Thus, the users in the network need not synchronize their time slot counters. This property is especially useful if we need to change the number of users $M$. A new user can be introduced in the network and it needs to only synchronize the start of a time slot and does not need to obtain any additional information from other users or the \ac{AP}. 

\subsection{Application specific changes to ALOHA-QT}

The ALOHA-QT algorithm \cite{alfaropt} is designed for a system to optimize throughput in a fair manner by avoiding collisions via implicit coordination over the feedback. The authors show its applicability in a system model where all the users are active, when the number of active users is slowly increasing or slowly decreasing as well as when there is frequent activation and deactivation at the start of every 100th slot. The system model makes no assumptions on the maximum number of users that the scheme can accommodate. 

This differs from the model which we have defined in section \ref{sec:System_Model}. In this model, the activation and deactivation are less frequent but random and do not need to occur at the beginning of a slot batch. The system designer knows the maximum number of users $M$ in the network. We applied the ALOHA-QT algorithm to our system model and made two key observations in terms of \ac{AoI}. These observations lead us to suggest to following changes to the ALOHA-QT algorithm to better suit the system model presented in section \ref{sec:System_Model}.
\subsubsection{Skip voluntary relinquishment in step 6:}
This step was designed to make sure that no user holds a higher schedule for a long time. If any user relinquishes selected schedules in this step, other users compete to grab these schedules, causing collisions. At the same time, the user who relinquished the schedules begins transmitting in some other users selected schedule producing further collisions. This causes the throughput to drop temporarily. While this trade-off might be useful in maintaining fairness with a static number of users over a long time, it was found that it does not help when the users spend a random amount of time in the network before becoming inactive. Therefore, we suggest skipping this step for our application. 

\subsubsection{Select only one schedule in step 2:}
The possibility of allowing the users to select more than one schedule in ALOHA-QT was designed to allow a flexible \textit{throughput} for all active users. Firstly, it was observed that in most cases, the secondary schedules were either children or siblings of the primary selected schedule i.e. the one with maximum weight. Secondly, it was also observed that once a user selects one or more schedules, their weights quickly rise up to become 1. This is caused by the reinforcing effect of the multiplicative update on receiving positive feedback. A newly active node entering the network needs to compete with the selected schedules of many users in order to find a new collision-free schedule in the \ac{PT}. If we allow the users to select more than one schedule, the newly active user is likely to face competition from more schedules and hence take more time to find a new collision free-schedule in the \ac{PT}. 

\subsubsection{When PT is settled, only run Step 3:}
With the suggested modifications, the users select only one unique schedule in the \ac{PT} which is neither an ancestor nor a descendent of the schedules selected by other users. When we establish the network with all active users running the algorithm with the mentioned modifications, they take some time (we call this settling time) to select their unique schedule in the PT. This results in the network achieving full channel utilization. We call this condition a ``settled \ac{PT}''. However the tree does not remain settled forever, as there will be arrivals and departures of users in the network as they become active and inactive. These events unsettle the tree for a certain amount of time (we call this resettling time) before the tree settles once again. We propose that the users in a settled tree do not change their selected schedules unless there is an arrival or departure of a user in the network. Therefore, they do not need to update any state variables when they are in a settled tree. If there are no collisions or idles detected in the last $2^J$ slots, the users deem that the \ac{PT} is settled. In a settled state, the users perform only step 3 of the algorithm. Thus doing minimum work while still avoiding collisions and idle slots entirely.    
 
The comparison results between pure ALOHA-QT and the modifications (modified ALOHA-QT or mAQT) are shown in section \ref{sec:Evaluation}. 
\begin{figure*}[ht]
    \centering
    \begin{subfigure}[b]{0.48\textwidth}
    \centering
    \caption{}
    \tikzset{every tree node/.style={minimum width=1.5em,draw,circle},
         blank/.style={draw=none},
         edge from parent/.style=
         {draw,edge from parent path={(\tikzparentnode) -- (\tikzchildnode)}},
         level distance=0.9cm, level 1/.style={sibling distance=10mm},level 2/.style={sibling distance=5mm}, level 3/.style={sibling distance=5mm}}
\begin{tikzpicture}
\Tree
[.\node[label={(0,0)}]{}; 
    [. \node[label={(0,2)}]{};
    [ .\node[label={(0,4)}, fill=PlotColor2!40]{\small{1}};
    ]
    [. \node[label={(2,4)}]{};
    \edge[];\node[label={(2,8)}, fill=PlotColor1!40]{\small{4}};
    \edge[];\node[label={(6,8)}, fill=PlotColor3!40]{\small{5}};
    ]
    ]
    [.\node[label={{(1,2)}}]{};
    [. \node[label={(1,4)}, fill=PlotColor4!40]{\small{2}};
    ]
    [.\node[label={(3,4)}, fill=PlotColor5!40]{\small{3}};
    ]
    ]
]
\end{tikzpicture}
\vspace{0.70cm}
    \label{fig:settled_tree_ex_5}
    \hfill
    \end{subfigure}
    \begin{subfigure}[b]{0.48\textwidth}
    \centering
    \caption{}
\begin{tikzpicture}
\begin{axis}[
width={\columnwidth}, height={4.5cm},
legend cell align={left},
legend style={fill opacity=0.8, draw opacity=1, text opacity=1, draw=white!80!black},
tick align=outside,
tick pos=left,
x grid style={white!69.0196078431373!black},
xtick={0,5,10,15,20},
xticklabels={$t_{s}+0$,$t_{s}+5$,$t_{s}+10$,$t_{s}+15$,$t_{s}+20$},
xmajorgrids,
xmin=-1, xmax=21,
xtick style={color=black},
xlabel={$t$},
y grid style={white!69.0196078431373!black},
ymajorgrids,
ymin=0.65, ymax=8.35,
ylabel={$\Delta_i$},
ylabel style={yshift=-2em},
ytick style={color=black},
ytick={1,2,4,8},
]
\addplot [semithick, PlotColor2]
table {%
0 1
1 2
2 3
3 4
4 1
5 2
6 3
7 4
8 1
9 2
10 3
11 4
12 1
13 2
14 3
15 4
16 1
17 2
18 3
19 4
20 1
};
\addlegendentry{$i$=1}
\addplot [semithick, PlotColor1]
table {%
0 3
1 4
2 5
3 6
4 7
5 8
6 1
7 2
8 3
9 4
10 5
11 6
12 7
13 8
14 1
15 2
16 3
17 4
18 5
19 6
20 7
};
\addlegendentry{$i$=4}
\addplot [semithick, PlotColor3]
table {%
0 7
1 8
2 1
3 2
4 3
5 4
6 5
7 6
8 7
9 8
10 1
11 2
12 3
13 4
14 5
15 6
16 7
17 8
18 1
19 2
20 3
};
\addlegendentry{$i$=5}
\addplot [semithick, PlotColor4]
table {%
0 2
1 3
2 4
3 1
4 2
5 3
6 4
7 1
8 2
9 3
10 4
11 1
12 2
13 3
14 4
15 1
16 2
17 3
18 4
19 1
20 2
};
\addlegendentry{$i$=2}
\addplot [semithick, PlotColor5]
table {%
0 4
1 1
2 2
3 3
4 4
5 1
6 2
7 3
8 4
9 1
10 2
11 3
12 4
13 1
14 2
15 3
16 4
17 1
18 2
19 3
20 4
};
\addlegendentry{$i$=3}
\end{axis}

\end{tikzpicture}
    \label{fig:settled_tree_tx_pattern}
    \end{subfigure}
    \caption{Settled Policy Tree: (a) When a policy tree is settled, it represents a full binary tree where leaf nodes have no children and all other nodes have $2$ children. The resulting network attains a throughput of $1$. (b) Every user $i$ has max AoI equal to $2^{l_{i}}$ where $l_i$ is its selected level in a settled tree. $t_s$ is any time slot after the tree is settled.}
    \label{fig:settled_tree_full}
\end{figure*}
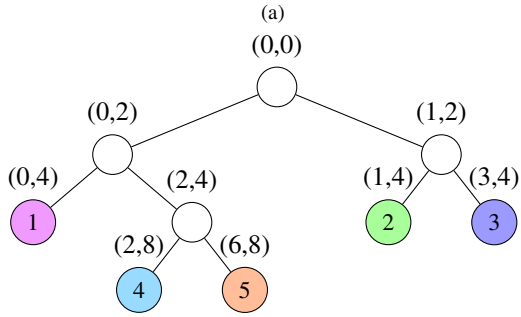
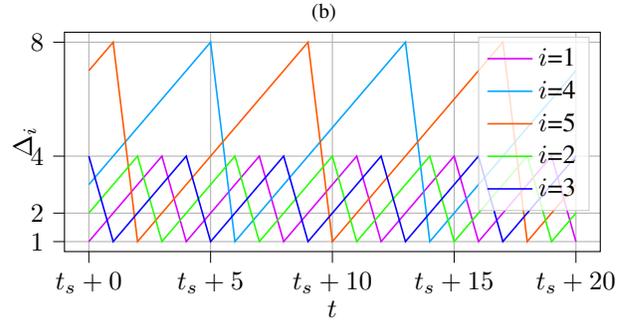
%

\section{Analysis}
\label{sec:Analysis}
We assume that the average time between two events that disturb a settled tree $\frac{k}{M}$, is less than the average resettling time. Hence the users spend more time in a settled tree rather than in an unsettled one. In this section, we obtain insights into the \ac{AoI} performance of the mAQT when the tree is settled. In such a case, the number of active users $\nint[t]$ for all time slots $t$ in this period is constant \footnote{In fact, users in mAQT do not even update their time slot counters when the tree is settled.}. Therefore, we drop the time index for the ease of notation and refer to the number of active users as $\nint$ in this section. 

%
One realization of a settled tree for $n=5$ is shown in Figure \ref{fig:settled_tree_ex_5}. The $n$ leaf nodes filled with color are the selected schedules by each of the 5 users in the system. The resulting \ac{AoI} of each user over 20 time slots after the tree is settled is shown in Figure \ref{fig:settled_tree_tx_pattern}. It can be seen that there is a successful transmission at every slot, resulting in the network achieving full channel utilization. 

The $\Delta_i$ for every user is cyclic with period $2^{l_{i}}$. The long-term $(t \to \infty)$ mean \ac{AoI} per slot of a user $i$, $\Bar{\Delta}_i$ in a settled tree is the mean \ac{AoI} of the period,
\begin{equation}
    \Bar{\Delta}_i = \frac{1}{2^{l_i}}\sum_{t=t_{0}+1}^{t_{0} + 2^{l_i}}\Delta_i [t] =\frac {\sum_{t=1}^{2^{l_i}}t}{2^{l_i}} = \frac{2^{l_i} + 1}{2}. 
    \label{eqn:mean_aoi_pt}
\end{equation}

The mean \textit{network} \ac{AoI} $\Bar{\Delta}$ is the mean \ac{AoI} per slot of all active users in the network,
\begin{equation}
    \Bar{\Delta} \triangleq \frac{1}{n}\sum_{i=1}^{n}\Bar{\Delta}_i =\frac{1}{2}(1 + \frac{1}{\nint}\sum_{i=1}^{\nint}2^{l_i}) .
    \label{eqn:aoi_for_network_scheudle}
\end{equation}

Even though the channel is fully utilized when the tree is settled, the fraction of the channel resources utilized by each user in the system can be different. Every user obtains ${2^{-l_i}}$-th of the channel resources and $\sum_{i=1}^{n}{2^{-l_{i}}} = 1$. The mAQT algorithm has a degree of randomness, which may lead to different realizations of the \ac{PT} for the same number of users $\nint$. Thus the users may obtain different values of $\Bar{\Delta}$ even for the same number of users $\nint$ depending on the manner in which they settle. To analyze this further we make use of the \textit{balance} property of a \ac{FBT}\cite{rosenalgorithm}, an abstract data structure commonly used in computer science. 

\subsection{Settled Trees as Full Binary Trees}
A \ac{FBT} is defined as a tree structure where every node has either two or no children. A settled \ac{PT} looks exactly like a \ac{FBT} with a node in the \ac{FBT} representing a schedule in the \ac{PT}. The leaf nodes (ones without children) represent the selected schedules of the $\nint$ users. Any sub-tree of a \ac{FBT} is also a \ac{FBT}. We define the set of selected levels (of schedules) by the users of particular realization $r$ of a settled \ac{PT} with $\nint$ users as $\mathcal{T}_r = \{l_1,l_2,...,l_n\}$. The height of a \ac{FBT} $h$ is defined as the number of edges between the root node and the farthest leaf node. For realization $r$ of settled tree the height is $l_{max}^{r} = \max(\mathcal{T}_r)$.        

A fully balanced \ac{FBT} is where the difference between heights of the two principle sub-trees of any sub-tree is at most 1. The tree in Figure \ref{fig:settled_tree_ex_5} is an example of a fully balanced \ac{FBT}. The closer the values of $l_i$  in a particular realization of a settled tree are to each other, the more \textit{balanced} that \ac{PT} is.

\begin{theorem}
\label{th:tree_balanceness}
The mean \ac{AoI} per user $\Bar{\Delta}$ of a settled policy tree depends on how balanced the tree is. For a given $\nint$, a fully balanced \ac{FBT} is $\Bar{\Delta}$ optimal.  
\end{theorem}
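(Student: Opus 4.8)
The plan is to recast the statement as a discrete optimization problem and then argue by a constraint-preserving local move. Since the definition of $\Bar{\Delta}$ in \eqref{eqn:aoi_for_network_scheudle} reads $\Bar{\Delta}=\tfrac{1}{2}\bigl(1+\tfrac{1}{\nint}\sum_{i=1}^{\nint}2^{l_i}\bigr)$, the map $S\mapsto\Bar{\Delta}$ with $S:=\sum_{i=1}^{\nint}2^{l_i}$ is affine and strictly increasing, so for fixed $\nint$ minimizing $\Bar{\Delta}$ over all settled trees is equivalent to minimizing $S$ over all admissible level multisets $\mathcal{T}_r$. The admissibility condition is exactly that $\mathcal{T}_r$ be the leaf-level multiset of an \ac{FBT} with $\nint$ leaves, which is captured by the Kraft-type equality $\sum_{i=1}^{\nint}2^{-l_i}=1$ already noted above. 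First I would therefore state the problem as: minimize $\sum_i 2^{l_i}$ subject to $\sum_i 2^{-l_i}=1$ and $l_i\in\mathbb{N}$, and show that every minimizer has all its levels within one of each other.

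The key step is a local exchange that keeps both $\nint$ and the Kraft equality intact. A naive pairwise swap $l_i\mapsto l_i+1$, $l_j\mapsto l_j-1$ fails because it perturbs $\sum_i 2^{-l_i}$ away from $1$; the rigidity of this constraint is the main obstacle, and it forces a move that operates on the tree structure rather than on individual levels. Suppose a realization $r$ is not balanced, i.e. $l_{max}^{r}-\min(\mathcal{T}_r)\geq 2$. At the deepest level the sibling of a deepest leaf must itself be a leaf at level $l_{max}^{r}$ (an internal sibling would push a leaf below $l_{max}^{r}$, and a shallower sibling is impossible in an \ac{FBT}). I would then (i) merge this sibling pair into their parent, turning it into a single leaf at level $l_{max}^{r}-1$, and (ii) split a shallowest leaf, at level $l_{\min}:=\min(\mathcal{T}_r)$, into two leaves at level $l_{\min}+1$. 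Operation (i) removes one leaf and operation (ii) restores it, so $\nint$ is unchanged; each operation replaces $2\cdot 2^{-l}$ by $2^{-(l-1)}$ or vice versa, so the Kraft sum is preserved and the result is again a settled \ac{FBT}.

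It then remains to evaluate the effect on the objective. The merge changes $S$ by $2^{l_{max}^{r}-1}-2\cdot 2^{l_{max}^{r}}=-3\cdot 2^{l_{max}^{r}-1}$, and the split changes it by $2\cdot 2^{l_{\min}+1}-2^{l_{\min}}=3\cdot 2^{l_{\min}}$, so
\[
\Delta S = 3\bigl(2^{l_{\min}}-2^{\,l_{max}^{r}-1}\bigr) < 0
\]
whenever $l_{\min}\leq l_{max}^{r}-2$. Thus any unbalanced tree is strictly improvable, so a minimizer must satisfy $l_{max}^{r}-l_{\min}\leq 1$. A short supporting lemma shows that for an \ac{FBT} this ``all levels within one'' property is equivalent to full balance: for a node at depth $d$ every leaf below it sits at absolute level $h-1$ or $h$, so its two principal subtrees have heights in $\{h-2-d,\,h-1-d\}$ and hence differ by at most $1$. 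Finally I would note that the balanced level multiset is uniquely determined by $\nint$ — namely $2^{h}-\nint$ leaves at level $h-1$ and $2\nint-2^{h}$ at level $h$ with $h=\lceil\log_2\nint\rceil$ — so every fully balanced \ac{FBT} attains the same, minimal value of $S$ and therefore of $\Bar{\Delta}$, which establishes both claims of the theorem. The delicate points to get right are the existence of the deepest sibling-leaf pair and the verification that the double move never leaves the class of settled trees.
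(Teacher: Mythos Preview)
Your proposal is correct and follows essentially the same approach as the paper: both perform the identical local move of merging a deepest sibling pair and splitting a shallowest leaf, then compute the resulting change in $\sum_i 2^{l_i}$ to obtain the same $3\bigl(2^{l_{\min}}-2^{\,l_{max}-1}\bigr)$ expression. Your version is somewhat more careful than the paper's---you make explicit the Kraft-equality preservation, the existence of the deepest sibling leaf, and the equivalence between ``all leaf levels within one'' and the paper's subtree-height definition of fully balanced---but the argument is the same.
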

\begin{proof}
A \ac{FBT} has at least two sibling leaf nodes at height $h$. Let the leaf node(s) for realization $a$ at a higher level be at $l_{min}^{a}$ such that $l_{max}^{a} > l_{min}^{a}$. Now, consider the two siblings at $l_{max}^{a}$ and one leaf node at $l_{min}^{a}$. We explicitly write these three values in last sum in equation \eqref{eqn:aoi_for_network_scheudle}, 
\begin{equation}
    \Bar{\Delta}_{a} = \frac{1}{2}(1 + \frac{1}{\nint}\sum_{i=4}^{\nint}2^{l_i}+ \frac{1}{n}(2^{l_{max}^{a}}+2^{l_{max}^{a}}+2^{l_{min}^{a}})) ,
    \label{eqn:unbalanced_no_1}
\end{equation}
where $\Bar{\Delta}_{a}$ stands for the $\Bar{\Delta}$ for realization $a$. Now, we perform balancing operation on this tree to produce a new realization $b$ with the same number of leaf nodes $\nint$. This can be done by removing the sibling pair at level $l_{max}^{a}$ and giving the leaf at $l_{min}^{a}$ a pair of children. Hence the tree:
\begin{enumerate}
    \item Loses one leaf at level $l_{min}^{a}$.
    \item Adds two leaves at level $l_{min}^{a}+1$.
    \item Loses two leaves at level $l_{max}^{a}$.
    \item Adds one leaf at level $l_{max}^{a}-1$.
\end{enumerate}
Therefore, 
\begin{equation}
    \Bar{\Delta}_{b} = \frac{1}{2}(1 + \frac{1}{\nint}\sum_{i=4}^{\nint}2^{l_i}+ \frac{1}{n}(2^{l_{max}^{a}-1}+2^{l_{min}^{a}+1}+2^{l_{min}^{a}+1})) .
    \label{eqn:unbalanced_no_2}
\end{equation}
Subtracting $\Bar{\Delta}_{b}$ from $\Bar{\Delta}_{a}$ and simplifying it further we get, 

\begin{equation}
    \Bar{\Delta}_{a} - \Bar{\Delta}_{b} = \frac{3}{2\nint}(2^{l_{max}^{a}-1}-2^{l_{min}^{a}}) \geq 0. 
    \label{eqn:difference_balance}
\end{equation}
With equality holding if $l_{min}^{a} = l_{max}^{a} - 1$, which is the case for a fully balanced \ac{FBT}. Hence, as long as the tree is not fully balanced, this balancing operation results in a lower $\Bar{\Delta}$. 
\end{proof}
%

We are interested in finding the least balanced realization that will provide the upper bound of $\Bar{\Delta}$ for a given number of users $\nint$. A fully unbalanced tree or a skewed tree, has two leaf nodes at $\nint-1$-th level and one leaf node in all the levels between $\nint-1$-th level and root node. Putting these values of $l_i$ in equation \eqref{eqn:aoi_for_network_scheudle} and using the expression for the sum a geometric series, we get the mean \ac{AoI} for a skewed tree,  
%



%
\begin{equation}
    \Bar{\Delta}_{skew} = \frac{1}{2}(1 + \frac{3\cdot2^{n}-1}{2\nint}).
    \label{eqn:aoi_for_network_scheudle_unbalanced}
\end{equation}

\subsection{Optimal selection of parameter J}
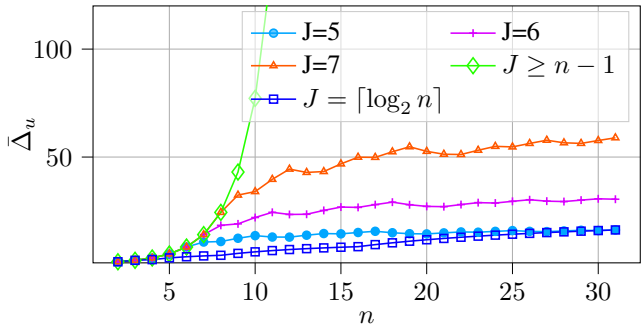
\begin{figure}[t]
    \centering
\begin{tikzpicture}
\begin{axis}[
width={\columnwidth}, height={5cm},
legend columns=2, 
legend cell align={left},
legend style={fill opacity=0.6, draw opacity=1, text opacity=1, draw=white!80!black},
tick align=outside,
tick pos=left,
x grid style={white!69.0196078431373!black},
xlabel={$\nint$},
xmajorgrids,
xmin=0.55, xmax=32.45,
xtick style={color=black},
y grid style={white!69.0196078431373!black},
ylabel={$\Bar{\Delta}_{u}$},
ylabel style={yshift=-0.8em},
ymajorgrids,
ymin=1, ymax=120,
ytick style={color=black}
]
\addplot [semithick, PlotColor1, mark=*, mark size=1.5pt]
table {%
2 1.5
3 2.16666666666667
4 3.25
5 5.1
6 8.33333333333333
7 10.6428571428571
8 10.875
9 12.3888888888889
10 13.6
11 12.9545454545455
12 12.9166666666667
13 13.8076923076923
14 14.5714285714286
15 14.4333333333333
16 15.0625
17 15.6176470588235
18 14.9444444444444
19 14.5
20 14.4
21 14.8809523809524
22 15.3181818181818
23 15.195652173913
24 15.5833333333333
25 15.94
26 15.5769230769231
27 15.462962962963
28 15.7857142857143
29 16.0862068965517
30 15.9666666666667
31 16.241935483871
};
\addlegendentry{J=5}
\addplot [semithick, PlotColor2, mark=+, mark size=1.5pt]
table {%
2 1.5
3 2.16666666666667
4 3.25
5 5.1
6 8.33333333333333
7 14.0714285714286
8 18.375
9 19.0555555555556
10 22
11 24.4090909090909
12 23.4166666666667
13 23.5
14 25.2857142857143
15 26.8333333333333
16 26.6875
17 27.9705882352941
18 29.1111111111111
19 27.9210526315789
20 27.15
21 27.0238095238095
22 28
23 28.8913043478261
24 28.7083333333333
25 29.5
26 30.2307692307692
27 29.5740740740741
28 29.3928571428571
29 30.051724137931
30 30.6666666666667
31 30.4677419354839
};
\addlegendentry{J=6}
\addplot [semithick, PlotColor3, mark=triangle, mark size=1.5pt]
table {%
2 1.5
3 2.16666666666667
4 3.25
5 5.1
6 8.33333333333333
7 14.0714285714286
8 24.375
9 32.3888888888889
10 34
11 39.6818181818182
12 44.4166666666667
13 42.8846153846154
14 43.2857142857143
15 46.8333333333333
16 49.9375
17 49.8529411764706
18 52.4444444444444
19 54.7631578947368
20 52.65
21 51.3095238095238
22 51.1818181818182
23 53.1521739130435
24 54.9583333333333
25 54.7
26 56.3076923076923
27 57.7962962962963
28 56.6071428571429
29 56.3275862068966
30 57.6666666666667
31 58.9193548387097
};
\addlegendentry{J=7}
\addplot [semithick, PlotColor4, mark=diamond, mark size=3pt]
table {%
2 1.5
3 2.16666666666667
4 3.25
5 5.1
6 8.33333333333333
7 14.0714285714286
8 24.375
9 43.0555555555556
10 77.2
11 140.045454545455
12 140.045454545455
13 140.038461538462
14 140.142857142857
15 140.83333333333
16 140.4375
17 140.02941176471
18 140.1111111111
19 140.0263157895
20 140.05
21 140.7380952381
22 140.090909091
23 140.02173913
24 140.458333333
25 140.42
26 140.07692308
27 140.68518519
28 140.89285714
29 140.0172414
30 140.0666667
31 140.016129
};
\addlegendentry{$J\geq\nint-1$}
\addplot [semithick, PlotColor5, mark=square, mark size=1.5pt]
table {%
2 1.5
3 2.16666666666667
4 2.5
5 3.3
6 3.83333333333333
7 4.21428571428571
8 4.5
9 5.38888888888889
10 6.1
11 6.68181818181818
12 7.16666666666667
13 7.57692307692308
14 7.92857142857143
15 8.23333333333333
16 8.5
17 9.44117647058824
18 10.2777777777778
19 11.0263157894737
20 11.7
21 12.3095238095238
22 12.8636363636364
23 13.3695652173913
24 13.8333333333333
25 14.26
26 14.6538461538462
27 15.0185185185185
28 15.3571428571429
29 15.6724137931034
30 15.9666666666667
31 16.241935483871
};
\addlegendentry{$J=\ceil{\log_{2}\nint}$}
\end{axis}

\end{tikzpicture}
    \caption{Selection of tree depth $J$: Forcing the maximum level to $J$ where $J<\nint-1$ stops the tree from settling into more unbalanced realizations. The lower the value of $J$, the better upper bound on AoI for a given $\nint$. However, we must be careful that $\nint$ never goes above $2^J$, as in that case there will not be enough available schedules for all the users and the PT will never settle.}
    \label{fig:pruning_tree}
\end{figure}

We can force the tree to settle in a way that more unbalanced realizations are possible to materialize. This can be done by selecting the parameter tree depth $J$ to be lower than $\nint-1$. As we reduce the value of $J$ further, we eliminate the possibility of the \ac{PT} settling into the more unbalanced realizations. Figure \ref{fig:pruning_tree} shows how the upper bound of the mean \ac{AoI} per user per slot can be decreased by decreasing $J$.   However, in order to have at least one schedule for each user, the tree depth must be greater than or equal to the height of a fully balanced realization , i.e., $J \geq \ceil{\log_{2}\nint}$. 


The selection of $J$ should take into account the maximum number of active users the system designer would like to provision for when the number of users is time-varying. In our system model presented in section \ref{sec:System_Model}, the number of active users will never exceed $M$. Therefore, we can safely select $J = \ceil{\log_{2}M}$. In general, the activation/deactivation model and its properties and total number of users in the network should be taken into consideration when selecting $J$.
%

%

\section{Evaluation}
\label{sec:Evaluation}

The box plots in Figure \ref{fig:resetlling_time} show the distribution of resettling time for a given $\nint= \{13,18,23,28\}$ under an activation or deactivation over 50 simulation runs. The upper and lower whisker of the boxplot encapsulate the entire range of the obtained data i.e the maximum and minimum resettling time. It is seen that the resettling time never exceeded 1100 slots and the maximum mean resettling time (center line of box plot) is 300 slots. Next, we simulate random geometrically distributed activations and deactivations for a system with $M=32$ users, $k=50,000$ and $n[0]=16$. The parameters for the algorithm shown in Table \ref{tab:params} were obtained using the gridsearch method. The channel utilization (fraction of successful slots) for 50 simulation runs of 50,000 slots each is shown in Figure \ref{fig:tpt_arrivals}. The maximum and minimum utilization over the 50 runs is shaded in the region around the mean. The number of users $\nint[t]$ is measured at the beginning of each batch of 100 slots. The seed of the random number generator which produced the activations and deactivations was kept the same to show the variation in the resettling time and to make meaningful comparisons between different schemes. The key demonstration from this figure is that the \ac{PT} is unsettled (observed by a drop in utilization) by a change in the number of active users. However, it then manages to \textit{always settle} and attain near full utilization after it is given enough time to resettle. Only the selected schedules of some users are disturbed during the resettling period which can be seen from the observation that the utilization in Figure \ref{fig:tpt_arrivals} never drops below 0.8. For comparison, \ac{SA} with the optimum access probability of $\frac{1}{\nint[t]}$ achieves utilization of only 0.4. \ac{SA} is a totally random medium access scheme and therefore suffers from collisions due to lack of coordination between users.

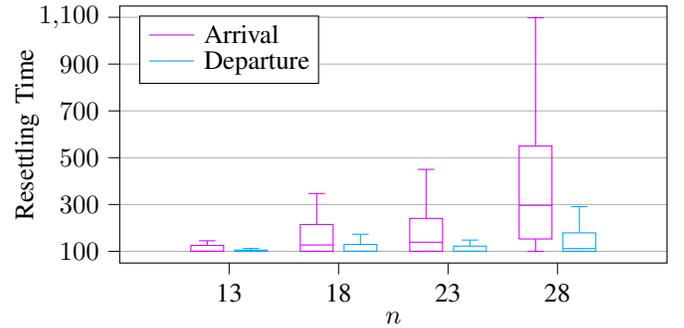
\begin{figure}[t]
    \centering
\begin{tikzpicture}

\definecolor{color1}{HTML}{00A6FF}
\definecolor{color0}{HTML}{D900FF}

\begin{axis}[
width={\columnwidth}, height={5cm},
legend cell align={left},
legend style={fill opacity=0.8, draw opacity=1, text opacity=1, at={(0.03,0.97)}, anchor=north west, draw=white!80!black},
tick align=outside,
tick pos=left,
x grid style={white!69.0196078431373!black},
xlabel={$\nint$},
xmin=-2, xmax=8,
xtick style={color=black},
xtick={0,2,4,6},
xticklabels={13,18,23,28},
y grid style={white!69.0196078431373!black},
ylabel={Resettling Time},
ymin=50.1, ymax=1147.9,
ytick style={color=black},
ytick={100,300,500,700,900,1100},
ymajorgrids,
]
\addplot [color0, forget plot]
table {%
-0.7 100
-0.1 100
-0.1 125
-0.7 125
-0.7 100
};
\addplot [color0, forget plot]
table {%
-0.4 100
-0.4 100
};
\addplot [color0, forget plot]
table {%
-0.4 125
-0.4 145
};
\addplot [color0, forget plot]
table {%
-0.55 100
-0.25 100
};
\addplot [color0, forget plot]
table {%
-0.55 145
-0.25 145
};
\addplot [color0, forget plot]
table {%
1.3 100
1.9 100
1.9 214.75
1.3 214.75
1.3 100
};
\addplot [color0, forget plot]
table {%
1.6 100
1.6 100
};
\addplot [color0, forget plot]
table {%
1.6 214.75
1.6 347
};
\addplot [color0, forget plot]
table {%
1.45 100
1.75 100
};
\addplot [color0, forget plot]
table {%
1.45 347
1.75 347
};
\addplot [color0, forget plot]
table {%
3.3 100
3.9 100
3.9 240.75
3.3 240.75
3.3 100
};
\addplot [color0, forget plot]
table {%
3.6 100
3.6 100
};
\addplot [color0, forget plot]
table {%
3.6 240.75
3.6 450
};
\addplot [color0, forget plot]
table {%
3.45 100
3.75 100
};
\addplot [color0, forget plot]
table {%
3.45 450
3.75 450
};
\addplot [color0, forget plot]
table {%
5.3 152.75
5.9 152.75
5.9 549.75
5.3 549.75
5.3 152.75
};
\addplot [color0, forget plot]
table {%
5.6 152.75
5.6 100
};
\addplot [color0, forget plot]
table {%
5.6 549.75
5.6 1098
};
\addplot [color0, forget plot]
table {%
5.45 100
5.75 100
};
\addplot [color0, forget plot]
table {%
5.45 1098
5.75 1098
};
\addplot [color1, forget plot]
table {%
0.1 100
0.7 100
0.7 105
0.1 105
0.1 100
};
\addplot [color1, forget plot]
table {%
0.4 100
0.4 100
};
\addplot [color1, forget plot]
table {%
0.4 105
0.4 112
};
\addplot [color1, forget plot]
table {%
0.25 100
0.55 100
};
\addplot [color1, forget plot]
table {%
0.25 112
0.55 112
};
\addplot [color1, forget plot]
table {%
2.1 100
2.7 100
2.7 129.5
2.1 129.5
2.1 100
};
\addplot [color1, forget plot]
table {%
2.4 100
2.4 100
};
\addplot [color1, forget plot]
table {%
2.4 129.5
2.4 173
};
\addplot [color1, forget plot]
table {%
2.25 100
2.55 100
};
\addplot [color1, forget plot]
table {%
2.25 173
2.55 173
};
\addplot [color1, forget plot]
table {%
4.1 100
4.7 100
4.7 122
4.1 122
4.1 100
};
\addplot [color1, forget plot]
table {%
4.4 100
4.4 100
};
\addplot [color1, forget plot]
table {%
4.4 122
4.4 148
};
\addplot [color1, forget plot]
table {%
4.25 100
4.55 100
};
\addplot [color1, forget plot]
table {%
4.25 148
4.55 148
};
\addplot [color1, forget plot]
table {%
6.1 100
6.7 100
6.7 178.75
6.1 178.75
6.1 100
};
\addplot [color1, forget plot]
table {%
6.4 100
6.4 100
};
\addplot [color1, forget plot]
table {%
6.4 178.75
6.4 291
};
\addplot [color1, forget plot]
table {%
6.25 100
6.55 100
};
\addplot [color1, forget plot]
table {%
6.25 291
6.55 291
};
\addplot [color0, forget plot]
table {%
-0.7 100
-0.1 100
};
\addplot [color0, forget plot]
table {%
1.3 127
1.9 127
};\label{p3}
\addplot [color0, forget plot]
table {%
3.3 139
3.9 139
};
\addplot [color0, forget plot]
table {%
5.3 296.5
5.9 296.5
};
\addplot [color1, forget plot]
table {%
0.1 100
0.7 100
};
\addplot [color1, forget plot]
table {%
2.1 100
2.7 100
};
\addplot [color1, forget plot]
table {%
4.1 100
4.7 100
};
\addplot [color1, forget plot]
table {%
6.1 112
6.7 112
};\label{p4}
\node [draw,fill=white!20] at (rel axis cs: 0.2,0.83) {\shortstack[l]{
\ref{p3} Arrival\\
\ref{p4} Departure}};
\end{axis}

\end{tikzpicture}
    \caption{Resettlig time: The resettling time after an arrival and departure on a settled tree with $J=5$. The mean, resettling time increases with the active number of users in the network $\nint$. The maximum resettling time never exceeds 1100 slots.}
    \label{fig:resetlling_time}
\end{figure}

Figure \ref{fig:final_result} shows the $\Bar{\Delta}$ for each batch for the same two cases. The shaded color marks the region between 10th and 90th percentile over 50 simulation runs. Here the $\Bar{\Delta}$ for \ac{SA} is greater than a factor of 4 as compared to mAQT. We also compare the analytical $\Bar{\Delta}$ of \ac{ADRA} \cite{ChenIoT}, where the users jointly optimise the channel access probability and an \ac{AoI} threshold based on $\nint[t]$. Here, the $\Bar{\Delta}$ for \ac{ADRA} is greater than a factor of 2 as compared to mAQT. Note that it was assumed that the users in \ac{SA} and \ac{ADRA} have a priori knowledge of $\nint[t]$ which was not the case for mAQT. This is an unrealistic advantage given to the users in \ac{SA} to demonstrate the power of our proposed method. Stationary random policies of \ac{SA} and \ac{ADRA} is far outperformed by mAQT due to the implicit coordination achieved between users over the feedback instead of relying on only on random chance.

\begin{table}[]
\centering
\begin{tabular}{|c|c|}
\hline
Algorithm & Weights      \\ \hline\hline
ALOHA-Q   & $2^{J}$     \\ \hline
ALOHA-QT      & $2^{J+1}-1$ \\ \hline
mAQT  & $2^{J+1}-1$ \\ \hline
\end{tabular}
\caption{The number of weights $|\mathcal{W}|$ required for the three expert-based RL algorithms. These weights are updated frequently and specify the amount of memory needed by each of the algorithms.}
\label{tab:memory}
\end{table}

In Figure \ref{fig:final_result2} we zoom in on the mAQT region and compare it with ALOHA-Q \cite{chu2015application}, which is another algorithm that employs \ac{RL} to achieve collision-free transmissions. In ALOHA-Q, each user selects a unique slot in a frame of fixed size $F$. A necessary condition for ALOHA-Q to settle is $\nint[t] \leq F$. We set the frame size to $2^J =F$, since that is the maximum number of users our setting for mAQT can accommodate. It is trivial to see that the upper bound for mAQT will always be smaller than that of ALOHA-Q since there will always be empty slots in ALOHA-Q unless $\nint[t] = 2^{J}$. From Tables \ref{tab:memory} and \ref{tab:complexity}, we see that the better performance of mAQT compared to ALOHA-Q comes at the cost of needing more memory and computation. The availability of different transmission rates in mAQT makes sure that no channel resources are wasted on idle slots, which is an important advantage over ALOHA-Q. The best case for our network setup would be for the users to transmit in a round-robin fashion \cite{faraziaoi}. This can be achieved via a centralized scheduling scheme or a partially centralized scheme with a complex feedback such as the DRR algorithm \cite{jiangdrr}. We show this RR plot as a baseline reference for the best case. To the best of out knowledge, no distributed \ac{MAC} algorithm gets closer the this baseline than mAQT. 

\begin{table}[]
\centering
\begin{tabular}{|c|c|c|c|}
\hline
       & ALOHA-Q & ALOHA-QT       & mAQT        \\ \hline\hline
Step 2 & $2^{J}$ & $2(2^{J+1}-1)$ & $2^{J+1}-1$ \\ \hline
Step 5 & $1$     & $J+1$          & $J+1$       \\ \hline
Step 6 & $1$     & $J+1$          & Skipped     \\ \hline
Step 7 & $2^{J}$ & $2^{J+1}-1$    & $2^{J+1}-1$ \\ \hline
Step 8 & $2^{J}$ & $2^{J+1}-1$    & $2^{J+1}-1$ \\ \hline
\end{tabular}
\caption{The worst case complexity ($\mathcal{O}$), of each step for the three expert-based RL algorithms. The proposed mAQT has half the run time in step 2 and skips step 6 entirely compared to ALOHA-QT. Its performance is the nonetheless better for our system model.}
\label{tab:complexity}
\end{table}

In Figure \ref{fig:final_result3}, we see that mAQT performs better than ALOHA-QT which justifies the modifications made to it, mentioned in section \ref{subsec:aloha_qt}. Table \ref{tab:complexity} shows the worst case run-time (complexity) of each step for ALOHA-QT and mAQT. The proposed scheme skips a threshold based search across all $2^{J+1}$ weights in step 2 and skips step 6. Additionally, the users in mAQT spend more than 50 percent of the 50,000 slots in a settled state. In mAQT, when the users are in a settled tree, they do run any of the complex steps from the Table \ref{tab:complexity} and only execute step 3. Hence, We obtain better performance at a lower cost by modifying ALOHA-QT for our system model.  

A final comparison of the $\Bar{\Delta}$ for all the 50,000 slots is shown in Table \ref{tab:final_result}. We see that mAQT shows a clear improvement over ALOHA-QT, even with the reduced complexity. Compared to ADRA, mAQT performs 50 percent better despite the fact that users in ADRA have additional knowledge of $\nint[t]$. 
\begin{figure}
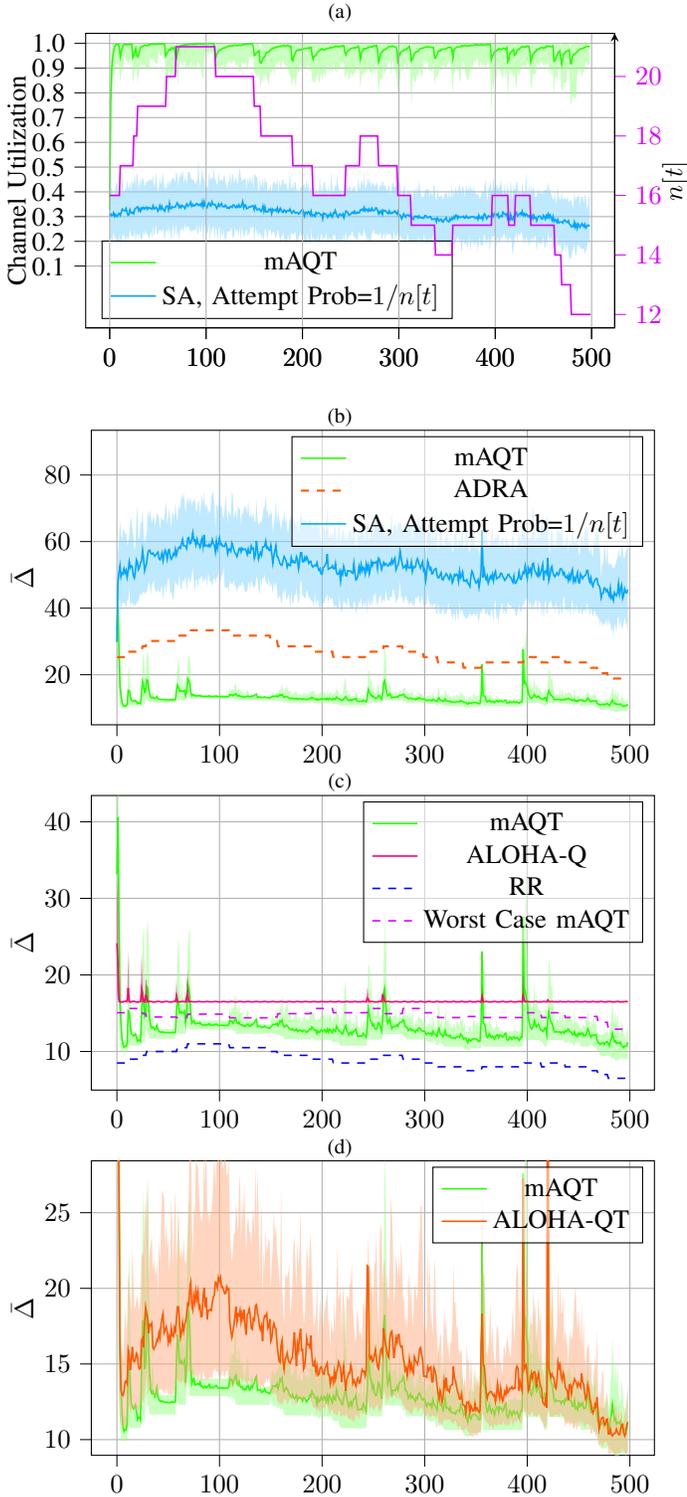

\begin{center}
\begin{subfigure}[b]{0.50\textwidth}
    \centering
    \caption{}
   \input{Images/Throughput_Arrivals}
   \label{fig:tpt_arrivals} 
\end{subfigure}
\begin{subfigure}[b]{0.50\textwidth}
    \centering
    \caption{}
   \input{Images/Final_Result}
   \label{fig:final_result}
\end{subfigure}
\begin{subfigure}[b]{0.50\textwidth}
    \centering
    \caption{}
    \input{Images/Final_result_1}
    \label{fig:final_result2}
\end{subfigure}
\begin{subfigure}[b]{0.50\textwidth}
    \centering
    \caption{}
    \input{Images/Final_result_2}
    \label{fig:final_result3}
    \end{subfigure}
\caption{Evaluation Results: The same arrival and departure pattern is applied for 30 simulation runs. (a) mAQT always settles and achieves two times higher utilization than SA. (b) The mean AoI comparison between mAQT, SA and ADRA. (c) When provisioning for the same number of users, mAQT performs better than ALOHA-Q. (d) The modifications suggested to ALOHA-QT improve the performance for the given system model.}
\end{center}
\end{figure}



\section{Conclusions}
\label{sec:conclusions}
\begin{table}[]
\centering
\setlength{\extrarowheight}{.5ex}
\begin{tabular}{|c|c|}
\hline
         & $\Bar{\Delta}$ \\ \hline\hline
RR       & $8.93$   \\ \hline
mAQT     & $13.07$  \\ \hline
ALOHA-QT & $15.32$  \\ \hline
ALOHA-Q  & $16.55$  \\ \hline
ADRA     & $26.71$  \\ \hline
SA       & $52.48$  \\ \hline
\end{tabular}
\caption{The mean network AoI per user per slot over all 50,000 slots. The proposed algorithm shows 15 percent reduction in $\Bar{\Delta}$ from ALOHA-QT and 50 percent reduction in the same from state of the art ADRA algorithm.}
\label{tab:final_result}
\end{table}
In this paper, we ponder the goal of minimizing the mean \ac{AoI} of a remote monitoring network with a time-varying number of users without a centralized scheduler. We make application-specific changes to the distributed \ac{RL} algorithm ALOHA-QT, which employs a policy tree (PT) to facilitate the coordination between users in a network so that they can select non-conflicting transmission slots. The users collectively obtain nearly full channel utilization when the \ac{PT} is settled. This settled \ac{PT} resembles a full binary tree, and the analysis of its properties shows that the balance of the settled tree affects the mean network \ac{AoI}. We also show how the selection of a design parameter in the algorithm, namely the tree depth $J$, can be used to improve the mean network \ac{AoI} by eliminating the possibility of the tree settling into more unbalanced realizations. Simulation results show that the suggested algorithm reduces mean network \ac{AoI} by 50 percent for state of the art age-dependent random access (ADRA) protocol, without the need for any interference cancellation or out-of-band communication. With this paper, we show that use of \ac{PT} to improve \ac{AoI} is promising in a decentralized \ac{MAC} setup. Some assumptions made in this work are not representative of real-life scenarios. For example, channel conditions other than interference, like noise, might cause a transmission or feedback to be lost. Hence, our future work will include implementing mAQT on a hardware testbed such as \cite{ayan2021} to investigate its performance outside of simulations.

\section*{Acknowledgment}
This work has been carried out with the support of DFG priority programme Cyber-Physical Networking (CPN) with the grant number KE 1863/5-2 and the Federal Ministry of Education and Research of Germany (BMBF) programme of "Souverän. Digital. Vernetzt." joint project 6G-life with project identification number 16KISK002. The authors would also like to thank  Dr. Ph.D. Fidan Mehmeti for his valuable inputs.

\bibliographystyle{IEEEtran}

\bibliography{bibliography}

\end{document}